\newtheorem{theorem}{Theorem}[section]
\newtheorem{lemma}[theorem]{Lemma}
\theoremstyle{definition}
\begin{document}
\title{Seymour's second neighbourhood conjecture for
quasi-transitive oriented graphs\thanks{Research of RL was partially supported by NNSFC under no. 11401353 and TYAL of Shanxi.}}
\author[1]{Gregory Gutin}
\author[2]{Ruijuan Li}
\affil[1]{Department of Computer Science, Royal Holloway, University of London, TW20 0EX Egham, UK}
\affil[2]{School of Mathematical Sciences, Shanxi University, Taiyuan, Shanxi, 030006, PR China}
\maketitle

\begin{abstract}
Seymour's second neighbourhood conjecture asserts that every oriented graph has a vertex whose second out-neighbourhood is at least as large as its out-neighbourhood. In this paper, we prove that the conjecture holds for
quasi-transitive oriented graphs, which is a superclass of tournaments and transitive acyclic digraphs. A digraph $D$ is called quasi-transitive is for every pair $xy,yz$ of arcs between distinct vertices $x,y,z$, $xz$ or $zx$ (``or'' is inclusive here) is in $D$.
\end{abstract}




\section{Introduction}

For convenience of the reader we provide all necessary terminology and notation in one section, Section 2.

One of the most interesting and challenging open questions concerning digraphs is Seymour's Second Neighbourhood Conjecture (SSNC) \cite{DeanLatka}, which asserts that one can always find, in an oriented graph $D$, a vertex $x$ whose second out-neighbourhood is at least as large as its out-neighbourhood, i.e.  $|N^+(x)| \le |N^{++}(x)|$. Following \cite{CohnGodboleHarknessZhang}, we will call such a vertex $x$ a {\em Seymour vertex}.

Observe that SSNC is not true for digraphs in general. Consider $\overleftrightarrow{K}_n$, the complete digraph on $n$ vertices. For each vertex $v\in V(\overleftrightarrow{K}_n)$, $N_{\overleftrightarrow{K}_n}^+(v)=V(\overleftrightarrow{K}_n)\setminus\{v\}$ while $N_{\overleftrightarrow{K}_n}^{++}(v)=\emptyset$.
The conjecture trivially holds for digraphs $D$ which contain a vertex of out-degree zero, e.g. for
acyclic digraphs. Indeed, $N^+_D(v_n)=N^{++}_D(v_n)=\emptyset$.

The first non-trivial result for SSNC was obtained by Fisher \cite{Fisher} who proved Dean's conjecture \cite{DeanLatka}, which is SSNC restricted to tournaments. Fisher used Farkas' Lemma and averaging arguments.
\begin{theorem}\cite{Fisher}\label{2nd:T}
In any tournament $T$, there is a vertex $v$ such that $|N_T^+(v)|\le |N_T^{++}(v)|$.
\end{theorem}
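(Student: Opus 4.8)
The plan is to prove the theorem via \emph{median orders}, a clean combinatorial route (developed in this form by Havet and Thomass\'e) that I find more transparent than Fisher's original Farkas'-Lemma/averaging argument. A median order of $T$ is a linear ordering $v_1, v_2, \ldots, v_n$ of $V(T)$ that maximizes the number of \emph{forward arcs}, i.e.\ arcs $v_iv_j$ with $i<j$. Since such an ordering always exists, it suffices to prove the sharper claim that the last vertex $v_n$ of any median order is a Seymour vertex; the theorem then follows immediately.

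First I would establish the structural consequences of maximality. The basic \emph{feedback properties} are: for all $i<j$, the vertex $v_i$ dominates at least half of $\{v_{i+1},\ldots,v_j\}$, and symmetrically $v_j$ is dominated by at least half of $\{v_i,\ldots,v_{j-1}\}$. Each is proved by a one-line exchange argument: if $v_i$ dominated fewer than half of an initial interval, relocating $v_i$ to just after $v_j$ would strictly increase the forward-arc count, contradicting optimality, and likewise for the second property by moving $v_j$ to the front. The same exchange reasoning shows the \emph{heredity} I will need for induction: the restriction of a median order to any interval $\{v_a,\ldots,v_b\}$ is again a median order of the induced sub-tournament, because permuting vertices inside an interval leaves all arcs crossing out of the interval forward or backward as before.

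Next I would set $v=v_n$, write $A=N^+(v)$ and $B=N^-(v)$, and note that since $T$ is a tournament these partition $V(T)\setminus\{v\}$, and every second out-neighbour of $v$ lies in $B$. Hence it suffices to construct an injection from $A$ into $N^{++}(v)\subseteq B$, i.e.\ to match each out-neighbour of $v$ to a distinct in-neighbour that is itself dominated by some out-neighbour of $v$. I would build this injection by induction on $n$, scanning the order from right to left: letting $v_k$ be the rightmost out-neighbour of $v$, all of $v_{k+1},\ldots,v_{n-1}$ lie in $B$, and the feedback property applied to the interval $\{v_k,\ldots,v_n\}$ forces $v_k$ to dominate at least half of them, supplying enough fresh vertices of $N^{++}(v)$ to absorb $v_k$; one then deletes a suitable vertex, invokes heredity to keep the truncated order median, and applies the induction hypothesis to the shorter tournament.

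The main obstacle is exactly this last step: upgrading the half-domination \emph{counts} into a genuine \emph{injection}. A crude bound on $\sum_{a\in A}|N^+(a)\cap B|$ is not enough, since distinct out-neighbours may dominate the same in-neighbour, and it is the median-order structure that rules out the bad overlaps and makes the greedy right-to-left matching coherent. Getting the induction bookkeeping right --- choosing which vertex to remove so that the remaining order stays median and the in-neighbours already committed as second out-neighbours are not re-used --- is where the real work lies; the feedback and heredity properties themselves are routine.
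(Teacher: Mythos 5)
The paper never proves this statement: Theorem~\ref{2nd:T} is quoted from Fisher's paper, and the authors only remark that Fisher's argument uses Farkas' Lemma and averaging, while Havet and Thomass\'e later gave a more elementary proof via median orders (which also yields Theorem~\ref{2nd:2T}). So your proposal has to stand on its own as a reconstruction of the Havet--Thomass\'e argument, which is the route you chose. Your preliminaries are correct and standard: median orders exist, both feedback (exchange) properties hold, heredity holds for restrictions to intervals, and the target claim --- that the last vertex $v_n$ of a median order is a Seymour vertex --- is the right one.

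However, the proposal stops exactly where the theorem actually lives, and the plan you sketch for that step would fail as stated. First, your heredity lemma applies only to intervals $\{v_a,\ldots,v_b\}$, yet your induction calls for deleting the rightmost out-neighbour $v_k$ of $v_n$ (or ``a suitable vertex''), which in general sits in the middle of the order; the sequence obtained by deleting a middle vertex need not be a median order of the smaller tournament, so the induction hypothesis cannot be invoked. This is precisely why the genuine proof only ever deletes the \emph{first} vertex $v_1$, and only when $v_1\notin N^+(v_n)\cup N^{++}(v_n)$: then $\{v_2,\ldots,v_n\}$ is an interval and the deletion changes neither $N^+(v_n)$ nor $N^{++}(v_n)$. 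Second, the Havet--Thomass\'e proof builds no injection at all. Its structure is: call $w$ \emph{bad} if $w\in N^-(v_n)\setminus N^{++}(v_n)$ (such a $w$ necessarily dominates $\{v_n\}\cup N^+(v_n)$); if no bad vertex exists, then $N^{++}(v_n)=N^-(v_n)$ and the feedback property $|N^-(v_n)|\ge|N^+(v_n)|$ finishes immediately; if $v_1$ is bad, delete it and induct; and the remaining case --- bad vertices exist but $v_1$ is good --- is handled by their Sedimentation Lemma, a re-ordering argument showing one can pass to another (local) median order with the same feed vertex in which the deletion-and-induct step becomes available. That lemma, or alternatively Fisher's Farkas/averaging computation, \emph{is} the content of the theorem. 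Your write-up names the difficulty honestly (``where the real work lies'') but does not supply it, so what you have is a correct setup plus an accurate description of the missing step, not a proof.
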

 A more elementary proof of SSNC for tournaments was given by Havet and Thomass\'{e} \cite{HavetThomasse} who introduced a median order approach. Their proof also yields the following stronger result.
 \begin{theorem}\cite{HavetThomasse}\label{2nd:2T}
 A tournament $T$ with no vertex of out-degree zero has at least two vertices $v$ such that $|N_T^+(v)|\le |N_T^{++}(v)|$.
 \end{theorem}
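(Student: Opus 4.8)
The plan is to follow the median-order method of Havet and Thomass\'e. A \emph{median order} of $T$ is a linear ordering $v_1,v_2,\ldots,v_n$ of $V(T)$ maximising the number of \emph{forward arcs}, i.e. arcs $v_iv_j$ with $i<j$. The basic tool is the \emph{feedback property}: for all $i<j$, the vertex $v_i$ dominates at least half of $\{v_{i+1},\ldots,v_j\}$, and dually $v_j$ is dominated by at least half of $\{v_i,\ldots,v_{j-1}\}$. I would prove this by an exchange argument: if, say, $v_i$ dominated fewer than half of that interval, then reinserting $v_i$ immediately after $v_j$ would strictly increase the number of forward arcs, contradicting maximality. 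I will also use two immediate consequences: first, $v_i\to v_{i+1}$ for every $i$ (the case $j=i+1$); and second, that the restriction of a median order to any interval $\{v_i,\ldots,v_j\}$ is a median order of the induced subtournament, since reordering inside the interval leaves all other arcs untouched.

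The first step is to locate Fisher's Seymour vertex: the last vertex $x=v_n$ of a median order is a Seymour vertex, which refines Theorem~\ref{2nd:T}. Since in a tournament $N^{++}(x)\subseteq N^-(x)$, it suffices to inject $N^+(x)$ into $N^{++}(x)$. I would build such an injection greedily, processing the out-neighbours of $x$ from right to left: for an out-neighbour $w=v_q$, the feedback property applied to the interval $\{v_q,\ldots,v_n\}$ forces $w$ to dominate at least half of it; as $w$ does not dominate $x$ (indeed $x\to w$), it dominates some $v_r$ with $q<r<n$, and every such $v_r$ is an in-neighbour of $x$, hence a member of $N^{++}(x)$ via $x\to w\to v_r$. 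The content of this step is to show that these choices can be made disjoint over all out-neighbours of $x$, i.e. that a Hall-type condition holds; this is where the feedback property is used in full strength.

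For the \emph{second} Seymour vertex I would pass to $T'=T-v_n=T[\{v_1,\ldots,v_{n-1}\}]$, whose induced order $v_1\cdots v_{n-1}$ is again a median order; by the previous step its last vertex $v_{n-1}$ is a Seymour vertex of $T'$, say with $a:=|N^+_{T'}(v_{n-1})|\le|N^{++}_{T'}(v_{n-1})|=:b$. Now restore $v_n$. Because $v_{n-1}\to v_n$, the out-neighbourhood grows by exactly one, $|N^+_T(v_{n-1})|=a+1$, while the second out-neighbourhood can only grow, $|N^{++}_T(v_{n-1})|=b+g$ with $g\ge 0$. Hence $v_{n-1}$ is a Seymour vertex of $T$ provided $g\ge a+1-b$. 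If the inequality in $T'$ is strict ($a<b$) this is automatic; the only real case is the tight one $a=b$, where I must exhibit a new second out-neighbour, i.e. show $g\ge 1$. Here the hypothesis that $T$ has no vertex of out-degree zero enters decisively: $v_n$ then has an out-neighbour $y$, with $y\neq v_{n-1}$, and the aim is to show that this $y$ (or some out-neighbour of $v_n$) is a genuinely new second out-neighbour of $v_{n-1}$, one not already counted in $b$. Granting this, $v_{n-1}$ and $v_n$ are two distinct Seymour vertices, as $v_{n-1}\neq v_n$ is automatic.

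The main obstacle is the tight case of the last step: proving $g\ge 1$. When $v_{n-1}$ is a sink of $T'$ the argument is immediate, since then every vertex of $T'$ — in particular the out-neighbour $y$ of $v_n$ supplied by the no-sink hypothesis — beats $v_{n-1}$ and is not yet a second out-neighbour, so $g\ge 1$. The delicate situation is when $v_{n-1}$ has out-neighbours in $T'$ yet the inequality is tight: I would rule out $g=0$ by exploiting the optimality of the median order, since an out-neighbour of $v_n$ that is neither beaten by $v_{n-1}$ nor already a second out-neighbour would let one rearrange the order to gain a forward arc, thereby forcing the required new second out-neighbour. Making this contradiction precise, together with verifying the Hall-type condition in the injection of the second paragraph, is where the real work lies.
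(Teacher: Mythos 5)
First, a point of reference: the paper does not prove Theorem~\ref{2nd:2T} at all --- it is quoted from Havet and Thomass\'e \cite{HavetThomasse} --- so your proposal must stand on its own, or faithfully reconstruct their median-order proof. As it stands it does neither, because both of the steps you defer as ``the real work'' are the entire content of the theorem, and one of your supporting claims is false. In the first step you assert that for an out-neighbour $w=v_q$ of $x=v_n$, every $v_r$ with $q<r<n$ and $w\to v_r$ is an in-neighbour of $x$. This is wrong. Take the rotational tournament on $\{1,2,3,4,5\}$ with $i\to i+1$ and $i\to i+2$ (indices modulo $5$): the ordering $(1,2,3,4,5)$ is a median order (its $3$ backward arcs match the minimum feedback arc set, since the five triangles $\{i,i+1,i+3\}$ can be hit only by arcs each lying in at most two of them), $x=5$ has out-neighbours $1$ and $2$, and $w=1$ dominates $2$, which is an \emph{out}-neighbour of $x$. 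So the greedy candidates are not automatically members of $N^{++}(x)$, and the ``Hall-type condition'' you leave unverified is not a routine check: producing, for each out-neighbour of $x$, a \emph{distinct} vertex that is moreover \emph{not} an out-neighbour of $x$ is precisely Havet--Thomass\'e's theorem, which they prove by induction on $n$ (analysing the position of $v_1$ relative to $N^+(v_n)$ and $N^{++}(v_n)$ and deleting vertices), not by an injection-plus-Hall argument.

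The second step has a worse defect: in the tight case $a=b$ your proposed resolution is circular. If there exists $y\in N^+(v_n)$ with $y\notin N^+(v_{n-1})$ and $y\notin N^{++}_{T'}(v_{n-1})$, then $y$ is \emph{by definition} a new second out-neighbour of $v_{n-1}$ (via $v_{n-1}\to v_n\to y$), so $g\ge 1$ with no rearrangement or appeal to optimality needed; what actually requires proof is that such a $y$ exists, i.e.\ that $N^+(v_n)\not\subseteq N^+(v_{n-1})\cup N^{++}_{T'}(v_{n-1})$, and for this you give no argument whatsoever. Note also that your candidate second vertex is not the one Havet and Thomass\'e use: since the feedback property forces $v_{n-1}\to v_n$, the vertex $v_{n-1}$ is never an out-neighbour of $v_n$, whereas they take the out-neighbour $v_j$ of $v_n$ of \emph{largest index} (which exists by the no-sink hypothesis, and has $j\le n-2$) and prove the nontrivial ``sedimentation'' lemma that moving $v_j$ to the end of the order produces another median order, whence $v_j$ is a second Seymour vertex. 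Your claim that $v_{n-1}$ itself is always a Seymour vertex of $T$ may well be true --- in small examples the median property does exclude the bad configurations --- but it is an unproved lemma of comparable difficulty to the sedimentation lemma, so the proposal does not prove the theorem.
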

Fidler and Yuster \cite{FidlerYuster} further developed the median order approach and proved that SSNC holds for oriented graphs $D$ with minimum degree $|V(D)|-2$, tournaments minus a star, and tournaments minus the arc set of a subtournament.
The median order approach was also used by Ghazal \cite{Ghazal} who proved a weighted version of SSNC for tournaments missing a generalized star.
Kaneko and Locke \cite{KanekoLocke} proved SSNC for oriented graphs with minimum out-degree at most 6.   Cohn, Godbole, Wright Harkness, and Zhang \cite{CohnGodboleHarknessZhang} proved that the conjecture holds for random oriented graphs.

Another approach to SSNC is to determine the maximum value $\gamma$ such that in every oriented graph $D$, there exists a vertex $x$ such that $|N_D^+(x)|\le \gamma|N_D^{++}(x)|$. SSNC asserts that $\gamma=1$. Chen, Shen, and Yuster \cite{ChenShenYuster} proved that $\gamma\ge r$ where $r=0.657298\ldots$ is the unique real root of $2x^3+x^2-1=0$. They also claim a slight improvement to $r\ge 0.67815\ldots$.

In this paper, we consider Seymour's Second Neighbourhood Conjecture for quasi-transitive digraphs. We use a decomposition theorem of Bang-Jensen and Huang \cite{qtdschar} for quasi-transitive digraphs, Theorem \ref{qtdschar}. We also use some structural properties of extended tournaments, a subclass of quasi-transitive oriented graphs.


\section{Terminology and Notation}

We will assume that the reader is familiar with the standard terminology on digraphs and refer to \cite{BangGutin} for terminology not discussed here. In this paper, all digraphs have no multiple arcs or loops.

We denote the vertex set and the arc set of a digraph $D$ by $V(D)$ and $A(D)$, respectively. For a vertex subset $X$, we denote by $D\langle X\rangle$ the subdigraph of $D$ induced by $X$, $D\langle V(D)-X\rangle$ by $D-X$. In addition, $D-x=D-\{x\}$ for a vertex $x$ of $D$.

Let $x,y$ be distinct vertices in $D$. If there is an arc from $x$ to $y$ then we say that $x$ {\em dominates} $y$, write $x \rightarrow y$ and call $y$ (respectively, $x$) an {\em out-neighbour} (respectively, an {\em in-neighbour}) of $x$ (respectively, $y$). For a subdigraph or simply a vertex subset $H$ of $D$ (possibly, $H=D$), we let $N_H^+(x)$ (respectively, $N_H^-(x)$) denote the set of out-neighbours (respectively, the set of in-neighbours) of $x$ in $H$ and call it {\em out-neighbourhood} (respectively, {\em in-neighbourhood}) of $x$ in $H$. Furthermore, $d_H^+(x)=|N_H^+(x)|$ (respectively, $d_H^-(x)=|N_H^-(x)|$) is called the {\em out-degree} (respectively, {\em in-degree}) of $x$. Let
\[
N_H^{++}(x)=\bigcup\limits_{u\in N_H^+(x)}N_H^+(u)\setminus N_H^+(x),
\]
which is called the {\em second out-neighbourhood} of $x$ in $H$.

A digraph $D$ is said to be {\em strong}, if for every pair of vertices $x$ and $y$, $D$ contains a directed path from $x$ to $y$ and a directed path from $y$ to $x$. A strong component of a digraph $D$ is a maximal induced subdigraph of $D$ which is strong. If $D_1,\ldots, D_t$ are the strong components of $D$, then clearly $V(D_1)\cup \ldots \cup V(D_t)=V(D)$ (a digraph with only one vertex is strong). Moreover, we must have $V(D_i)\cap V(D_j)=\emptyset$ for every $i\ne j$. The strong components of $D$ can be labelled $D_1,\ldots, D_t$ such that there is no arc from $D_j$ to $D_i$ unless $j<i$. We call such an ordering an {\em acyclic ordering
of the strong components} of $D$.

A digraph $D$ is {\em acyclic} if it has no directed cycle. An ordering $v_1, v_2, \ldots, v_n$ of vertices of a digraph $D$ is called {\em acyclic} if for every arc $v_iv_j\in A(D)$, we have $i<j$. It is well-known that every acyclic digraph has an acyclic ordering \cite{BangGutin}. Clearly, an acyclic ordering is a median order for acyclic digraphs.

Let $D$ be a digraph with vertex set $\{v_1, v_2, \ldots, v_n\}$, and let $G_1, G_2, \ldots, G_n$ be digraphs which are pairwise vertex disjoint. The {\em composition} $D[G_1, G_2, \ldots, G_n]$ is the digraph $L$ with vertex set $V(G_1)\cup V(G_2)\cup\ldots \cup V(G_n)$ and arc set $(\cup_{i=1}^n A(G_i))\cup\{g_ig_j\, |\, g_i\in V(G_i), g_j\in V(G_j), v_iv_j\in A(D)\}$. If $D=H[S_1, S_2,$ $\ldots, S_h]$ and none of the digraphs $S_1, S_2, \ldots, S_h$ has an arc, then $D$ is an {\em extension} of $H$. For $i\in \{1,2,\ldots, s\}$, each $S_i$ called the {\em partite set} of $D$.

An {\em oriented graph} is a digraph with no cycle of length two. A {\em tournament} is an oriented graph where every pair of distinct vertices are adjacent. An {\em extended tournament} is an extension of a tournament.

A digraph $D$ is {\em quasi-transitive} if for every pair $xy$ and $yz$ of arcs in $D$ with $x\ne z$ implies that $x$ and $y$ are adjacent. A digraph $D$ is {\em transitive} if, for every pair $xy$ and $yz$ of arcs in $D$ with $x\ne z$, the arc $xz$ is also in $D$. Observe that each transitive digraph is quasi-transitive and each extended tournament is also quasi-transitive.

To make quasi-transitive digraphs easier to deal with, Bang-Jensen and Huang \cite{qtdschar} introduced the following characterization of this class of digraphs.

\begin{theorem}\cite{qtdschar}\label{qtdschar}
Let $D$ be a quasi-transitive digraph.
\begin{itemize}
\item If $D$ is not strong, then there exists a transitive oriented graph $T$ with vertices $\{u_1, u_2,\ldots,u_t\}$ and strong quasi-transitive digraphs $H_1, H_2, \ldots,H_t$ such that $D = T[H_1, H_2,\ldots, H_t]$, where $H_i$ is substituted for $u_i, i\in \{1, 2, \ldots, t\}$.
\item If $D$ is strong, then there exists a strong semicomplete digraph $S$ with vertices $\{v_1, v_2, \ldots, v_s\}$ and quasi-transitive digraphs $Q_1, Q_2, \ldots, Q_s$ such that $Q_i$ is either a vertex or is non-strong and $D = S[Q_1, Q_2,\ldots, Q_s]$, where $Q_i$ is subsituted for $v_i, i\in\{1, 2, \ldots, s\}$.
\end{itemize}
\end{theorem}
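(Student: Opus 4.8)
The plan is to isolate a single local lemma about quasi-transitive digraphs and then read off both composition statements from it. Throughout write $UG(D)$ for the underlying graph of $D$ and $\overline{UG(D)}$ for its complement. The driving fact is a \emph{substitution lemma}: if $u,v$ are non-adjacent in a quasi-transitive $D$ and $w$ is adjacent to both, then $w$ relates to $u$ and to $v$ in exactly the same way, and that relation is a one-way domination (no digon). First I would prove this. If $w\to u$, then $v\to w$ is impossible, since $v\to w\to u$ would force $u,v$ adjacent by quasi-transitivity; as $w,v$ are adjacent this gives $w\to v$, and then $u\to w$ would give $u\to w\to v$ and again force $u,v$ adjacent, so $w\to u$ only and symmetrically $w\to v$ only. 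The case $u\to w$ is symmetric because the converse of a quasi-transitive digraph is quasi-transitive. Consequently each connected component $Q$ of $\overline{UG(D)}$ is a module: any $w\notin Q$ is adjacent to all of $Q$ (a non-adjacency would place $w$ in $Q$), and applying the lemma along a path of $\overline{UG(D)}$ inside $Q$ shows $w$ relates to all of $Q$ uniformly.

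For the non-strong part I would work with the strong components $H_1,\dots,H_t$, each quasi-transitive as an induced subdigraph. Between two distinct components all arcs run one way, since a digon between them would merge them. If $x\in H_i$ dominates $y\in H_j$, then every vertex of $H_i$ dominates every vertex of $H_j$: this is a double induction along internal paths, using that $x'\to x\to y$ forces $x'\to y$ and $x\to y\to y'$ forces $x\to y'$. Hence the condensation $T$ is a digraph whose arcs are complete dominations, written $H_a\Rightarrow H_b$. Finally $T$ is transitive: if $H_a\Rightarrow H_b\Rightarrow H_c$, then picking $x,y,z$ in them, $x\to y\to z$ forces $x,z$ adjacent, while $z\to x$ is impossible because $x\to y\to z\to x$ would be a cycle through three distinct components. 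As a condensation $T$ is acyclic, hence an oriented graph, and $D=T[H_1,\dots,H_t]$.

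For the strong part I would let $Q_1,\dots,Q_s$ be the connected components of $\overline{UG(D)}$; by the substitution lemma they are modules, so $D=S[Q_1,\dots,Q_s]$ for the quotient $S$. Distinct components share no non-adjacent cross pair, so every two vertices of $S$ are adjacent and $S$ is semicomplete; and $S$ is strong since a quotient of a strong digraph by a partition into modules is strong. Each $D\langle Q_i\rangle$ is quasi-transitive with $\overline{UG(D)}[Q_i]$ connected by the choice of $Q_i$.

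It remains to show each $Q_i$ is a single vertex or induces a \emph{non-strong} digraph, and this is the step I expect to be the main obstacle. It reduces to the claim that a strong quasi-transitive digraph on at least two vertices has disconnected underlying-graph complement. The tool is a shortest-path argument: among all non-adjacent ordered pairs pick one, $(u,v)$, at minimum distance $k$ in $D$, with a shortest path $x_0=u,\dots,x_k=v$. Quasi-transitivity on consecutive triples together with minimality of the path forces the back arcs $x_{i+2}\to x_i$, and minimality of $k$ forces $x_k\to x_1$ and $x_{k-1}\to x_0$. One checks $k\ne 2$ (the triple $x_0x_1x_2$ would make $u,v$ adjacent), while the walk $x_k\to x_1\to x_2\to x_0$ gives $\mathrm{dist}(v,u)\le 3$, whence $k=3$ and a rigid four-vertex configuration appears in which $u,v$ share a common out-neighbour and a common in-neighbour. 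The delicate part is converting this local rigidity, under the assumption that $\overline{UG(D)}$ is connected, into the contradiction that such a configuration cannot persist; equivalently, into a genuine bipartition $V=V_1\cup V_2$ with complete domination across it. I would push this through by a minimal-counterexample analysis, repeatedly applying the substitution lemma to the common in- and out-neighbours of the non-adjacent pair. This is where the bookkeeping, rather than any new idea, is heaviest, and it is the crux on which the strong case rests.
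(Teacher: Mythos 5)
You are attempting to prove a statement that the paper itself does not prove: Theorem \ref{qtdschar} is quoted from Bang-Jensen and Huang \cite{qtdschar}, so the benchmark is their original argument. Measured against it, your skeleton is the standard one, and the parts you actually carry out are correct: the substitution lemma is proved cleanly (including the exclusion of digons between a common neighbour and either member of a non-adjacent pair), the connected components of $\overline{UG(D)}$ are correctly shown to be modules, the non-strong case is complete (one-way arcs between strong components, propagation along internal paths to full domination $H_i\Rightarrow H_j$, transitivity of the acyclic condensation $T$), and in the strong case the quotient $S$ by the components of the complement is correctly identified as strong and semicomplete. You also reduce what remains --- that each $Q_i$ is a single vertex or non-strong --- correctly, to the claim that every strong quasi-transitive digraph on at least two vertices has disconnected $\overline{UG(D)}$ (since $\overline{UG(D\langle Q_i\rangle)}$ is connected by the choice of $Q_i$).

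That claim, however, is the entire mathematical content of the strong case, and you do not prove it: ``minimal-counterexample analysis, repeatedly applying the substitution lemma'' is a plan, not an argument, and the guiding hope that the rigid four-vertex configuration ``cannot persist'' is false. The configuration occurs in perfectly legitimate strong quasi-transitive oriented graphs: take $V=\{u,v,w,z\}$ with $u\to w$, $v\to w$, $w\to z$, $z\to u$, $z\to v$, and $u,v$ non-adjacent. This digraph is strong and quasi-transitive, $\mathrm{dist}(u,v)=3$, and $w$ and $z$ are a common out-neighbour and a common in-neighbour with $w\to z$ --- exactly your configuration --- yet no contradiction is available, because here $\overline{UG(D)}$ is (consistently with the theorem) disconnected. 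So no amount of local bookkeeping around a single non-adjacent pair can close the proof; the contradiction must exploit the global hypothesis that the non-adjacency graph is connected and spans $V(D)$, and this is precisely where Bang-Jensen and Huang need a separate induction on the order (the core of their paper), not a refinement of the shortest-path lemma. A secondary slip in the same passage: a ``bipartition $V=V_1\cup V_2$ with complete domination across it'' is not an equivalent reformulation. Disconnectedness of $\overline{UG(D)}$ yields complete \emph{adjacency} across a bipartition, with arcs in both directions necessarily present when $D$ is strong, whereas a completely dominated bipartition would outright contradict strongness; even for the contrapositive (connected complement implies non-strong), such a bipartition is sufficient but not necessary, as two isolated vertices show. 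In short: everything routine is done correctly and follows the original route, but the one genuinely hard step of the theorem is left as an acknowledged gap.
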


The decomposition described in Theorem \ref{qtdschar} is called the {\em canonical decomposition} of the quasi-transitive digraph $D$.



\section{Main Results}

We give the following easy but useful observation, which indicates the relationship between the Seymour vertex of a quasi-transitive oriented graph and the one of an extended tournament.

\begin{lemma}\label{qt&et}
Let $D$ be a strong quasi-transitive oriented graph and $D = S[Q_1,$ $Q_2,\ldots, Q_s]$ be the canonical decomposition. Let $D^*=S[V_1, V_2,\ldots, V_s]$ be an extended tournament, where $V_i$ is the vertex set of the subdigraph $Q_i$ for $i\in\{1, 2, \ldots, s\}$.  If there is a vertex $x\in V_i$ such that $x$ is a Seymour vertex of $Q_i$ and a Seymour vertex of $D^*$, then $x$ is a Seymour vertex of $D$.
\end{lemma}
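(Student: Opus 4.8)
The plan is to split every relevant neighbourhood of $x$ in $D$ into the part lying inside the partite set $V_i$ and the part lying in the remaining partite sets, and then to identify these two parts with the corresponding neighbourhoods in $Q_i$ and in $D^*$. First I would record that, since $D$ is an oriented graph, the strong semicomplete digraph $S$ from the canonical decomposition has no $2$-cycle and is therefore a strong tournament; consequently, in both $D$ and $D^*$, a whole partite set $V_j$ is dominated by $x$ exactly when $v_i\to v_j$ in $S$, and $x$ has no arc to any other vertex of $V_i$ in $D^*$.

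Next I would establish the out-degree identity. Writing $N^+_D(x)=N^+_{Q_i}(x)\cup\bigcup_{v_i\to v_j}V_j$, the first set lies in $V_i$ and the second lies outside $V_i$, so the union is disjoint; since the second set is precisely $N^+_{D^*}(x)$, this gives
\[
|N^+_D(x)|=|N^+_{Q_i}(x)|+|N^+_{D^*}(x)|.
\]

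The crux, which I expect to be the main obstacle, is to exhibit enough vertices in $N^{++}_D(x)$. I would prove that $N^{++}_{Q_i}(x)$ and $N^{++}_{D^*}(x)$ are \emph{disjoint} subsets of $N^{++}_D(x)$. Disjointness is immediate: $N^{++}_{Q_i}(x)\subseteq V_i$, while $N^{++}_{D^*}(x)=\bigcup_{v_k\in N^{++}_S(v_i)}V_k$ avoids $V_i$ because $v_i\notin N^{++}_S(v_i)$ (a $2$-cycle in $S$ is forbidden). For $N^{++}_{Q_i}(x)\subseteq N^{++}_D(x)$, any $w\in N^{++}_{Q_i}(x)$ lies in $V_i\setminus N^+_{Q_i}(x)=V_i\setminus(N^+_D(x)\cap V_i)$ and is reached from some $u\in N^+_{Q_i}(x)\subseteq N^+_D(x)$, so $w\in N^{++}_D(x)$. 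For $N^{++}_{D^*}(x)\subseteq N^{++}_D(x)$, any $w\in V_k$ with $v_k\in N^{++}_S(v_i)$ satisfies $v_i\not\to v_k$, hence $w\notin N^+_D(x)$; choosing $v_j$ with $v_i\to v_j\to v_k$ and any $u\in V_j$ (nonempty since $Q_j$ has a vertex) gives $u\in N^+_D(x)$ and $u\to w$, so $w\in N^{++}_D(x)$ as well. This is exactly the bookkeeping step where the oriented (no $2$-cycle) hypothesis and the nonemptiness of the partite sets are essential, and where I must be careful that the two ``lifted'' second neighbourhoods neither overlap each other nor slip back into $N^+_D(x)$.

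Finally, combining the containment with the two hypotheses that $x$ is a Seymour vertex of $Q_i$ and of $D^*$ yields
\[
|N^{++}_D(x)|\ge |N^{++}_{Q_i}(x)|+|N^{++}_{D^*}(x)|\ge |N^+_{Q_i}(x)|+|N^+_{D^*}(x)|=|N^+_D(x)|,
\]
so $x$ is a Seymour vertex of $D$, which is what we wanted.
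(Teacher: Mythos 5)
Your proof is correct and follows essentially the same route as the paper's: both decompose $N^+_D(x)$ and $N^{++}_D(x)$ into their $Q_i$-part and $D^*$-part and then add the two Seymour inequalities. The only difference is that you explicitly verify the containments and, crucially, the disjointness of $N^{++}_{Q_i}(x)$ and $N^{++}_{D^*}(x)$ (using that $S$ is a tournament), facts the paper dismisses with ``Clearly'' even though the counting argument genuinely needs them.
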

\begin{proof}
Since $x$ is a Seymour vertex in $Q_i$ and a Seymour vertex in $D^*$, we have
\[
|N^+_{Q_i}(x)| \le |N^{++}_{Q_i}(x)|, \quad |N^+_{D^*}(x)| \le |N^{++}_{D^*}(x)|.
\]
Clearly,
\[
N^+_{D}(x)=N^+_{Q_i}(x)\cup N^+_{D^*}(x), \quad N^{++}_{D}(x)=N^{++}_{Q_i}(x)\cup N^{++}_{D^*}(x)
\]
Thus $|N^+_{D}(x)|\le | N^{++}_{D}(x)|$.
\end{proof}

First we deal with SSNC for extended tournaments. 

\begin{theorem}\label{2nd:ET}
Let $D$ be an extended tournament. Then there is a vertex $v$ such that $|N_D^+(v)|\le|N_D^{++}(v)|$.
\end{theorem}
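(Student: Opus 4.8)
The plan is to reduce the statement to the tournament case (Theorems \ref{2nd:T} and \ref{2nd:2T}) by a weight-preserving blow-up. Write $D = T[S_1,\ldots,S_h]$, where $T$ is a tournament on $\{u_1,\ldots,u_h\}$ and each $S_i$ is an independent set with vertex set $V_i$, $|V_i|=n_i$. The first structural observation I would record is that for a vertex $v\in V_i$ the first and second out-neighbourhoods are \emph{unions of entire partite sets}: one has $N^+_D(v)=\bigcup_{u_i\to u_j}V_j$ and $N^{++}_D(v)=\bigcup_{u_k\in N^{++}_T(u_i)}V_k$, and in particular no vertex of $V_i$ itself lies in either set, since two vertices of one partite set are non-adjacent and an out-neighbour of $v$ can never dominate back into $V_i$. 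Consequently $|N^+_D(v)|$ and $|N^{++}_D(v)|$ depend only on the index $i$, so it suffices to find one partite set $V_i$ with $\sum_{u_i\to u_j}n_j \le \sum_{u_k\in N^{++}_T(u_i)}n_k$.

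To locate such an $i$ I would pass to a genuine tournament. Take a median order $\pi=(w_1,\ldots,w_n)$ of the oriented graph $D$, i.e.\ an ordering maximizing the number of forward arcs, and orient each partite set internally as a transitive tournament agreeing with $\pi$ (set $a\to b$ whenever $a,b\in V_i$ and $a$ precedes $b$); call the resulting tournament $\hat T$. Two facts drive the argument. First, $\pi$ is still a median order of $\hat T$: any ordering has at most as many forward \emph{inter}-partite arcs as $\pi$ (those are exactly the arcs of $D$, which $\pi$ maximizes) and at most all of the intra-partite arcs forward (which $\pi$ achieves by construction), so $\pi$ maximizes the total. Second, the last vertex $v=w_n$ is the \emph{sink} of its own partite set $V_i$ in $\hat T$, because every other vertex of $V_i$ precedes it and hence dominates it.

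Now I would invoke the median-order result of Havet and Thomass\'e underlying Theorems \ref{2nd:T}--\ref{2nd:2T}: the last vertex of a median order of a tournament is a Seymour vertex, so $|N^+_{\hat T}(v)|\le |N^{++}_{\hat T}(v)|$. It remains to transfer this inequality back to $D$, and here the sink property is essential. Since $v$ has no out-neighbour inside $V_i$, exactly as in $D$ one gets $N^+_{\hat T}(v)=\bigcup_{u_i\to u_j}V_j=N^+_D(v)$; and because every out-neighbour of $v$ lies in a blob $V_j$ with $u_i\to u_j$, a short case check (no same-blob vertex and no $N^+$-blob vertex can reappear as a second out-neighbour) gives $N^{++}_{\hat T}(v)=\bigcup_{u_k\in N^{++}_T(u_i)}V_k=N^{++}_D(v)$. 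Hence $|N^+_D(v)|\le |N^{++}_D(v)|$ and $v$ is a Seymour vertex of $D$.

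The main thing to get right is the neighbourhood identity at $v$: it holds only because $v$ was arranged to be a blob-sink, so that its internal out-degree is zero and the intra-partite arcs of $\hat T$ contribute nothing to $N^+_{\hat T}(v)$ and cannot create spurious second out-neighbours inside $V_i$ or inside the blobs dominated by $v$. If the Seymour vertex supplied by the tournament case were an arbitrary vertex of $\hat T$ rather than the last vertex of the median order, this transfer would fail; this is exactly why I route the argument through the median order rather than through the mere existence statement. Verifying that $\pi$ remains a median order after adding the intra-partite arcs is the other step to state carefully, though it is the one-line optimality comparison indicated above.
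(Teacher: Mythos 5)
Your proof is correct, and it rests on the same blow-up as the paper's: complete each partite set $V_i$ to a transitive tournament and pull a Seymour vertex of the resulting tournament back to $D$. The difference lies in which tournament theorem you invoke and how carefully you choose the completion. The paper completes the partite sets arbitrarily, applies Theorem \ref{2nd:T} to the resulting tournament $D'$ to get \emph{some} Seymour vertex $v$, and transfers it via $|N^+_D(v)|\le|N^+_{D'}(v)|$ and $|N^{++}_D(v)|=|N^{++}_{D'}(v)|$. You instead align the transitive completions with a median order $\pi$ of $D$, verify (correctly, by the one-line optimality comparison) that $\pi$ remains a median order of $\hat T$, and invoke the Havet--Thomass\'e feed-vertex theorem so that the Seymour vertex you obtain is the sink of its own partite set, which makes the transfer an exact identity of neighbourhoods.

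However, your stated reason for routing through median orders --- that the transfer ``would fail'' if the Seymour vertex were an arbitrary vertex of the completed tournament --- is false, and this is precisely the simplification the paper exploits. For \emph{every} $v\in V_i$, not just a partite-set sink, one has $N^+_D(v)\subseteq N^+_{D'}(v)$ and $N^{++}_{D'}(v)=N^{++}_D(v)$: the extra out-neighbours $v$ gains in $D'$ are its partite-set mates occurring after $v$ in the transitive order; their intra-set out-neighbours lie still further after $v$, hence are already in $N^+_{D'}(v)$, and their inter-set out-neighbourhood is exactly $N^+_D(v)$, so they generate nothing new; moreover no vertex of $V_i$ can ever be a second out-neighbour of $v$, since partite sets dominated by $V_i$ send no arcs back into $V_i$ and mates after $v$ dominate only mates even further after $v$. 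Thus the chain $|N^+_D(v)|\le|N^+_{D'}(v)|\le|N^{++}_{D'}(v)|=|N^{++}_D(v)|$ holds for any Seymour vertex of any transitive completion, which is the paper's shorter argument. Your median-order version is a valid alternative and previews the machinery the paper genuinely needs later (in the proof of Theorem \ref{2nd:2QTOG}), but for the present statement it buys nothing beyond what the crude completion already gives.
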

\begin{proof}
Let $D=S[V_1, V_2,\ldots, V_s]$ be an extended tournament with each $V_i$ being an independent set. Now replace each $V_i$ with a transitive tournament on the same vertex set $V_i$ and obtain a new digraph $D'$. Clearly, $D'$ is a tournament hence satisfies the SSNC with some vertex $v$. Note that $|N^+_{D}(v)|\le |N^+_{D'}(v)|$ and $|N^{++}_{D}(v)|=|N^{++}_{D'}(v)|$. Thus $v$ is a Seymour vertex in $D$.
\end{proof}

The following theorem shows that we can generalize Theorem \ref{2nd:T} to quasi-transitive oriented graphs.

\begin{theorem}\label{2nd:QTOG}
In any quasi-transitive oriented graph $D$, there is a vertex $v$ such that $|N_D^+(v)|\le|N_D^{++}(v)|$.
\end{theorem}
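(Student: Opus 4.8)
The plan is to argue by induction on $|V(D)|$, using the canonical decomposition (Theorem \ref{qtdschar}) to reduce $D$ either to the extended-tournament case already settled in Theorem \ref{2nd:ET} or to strictly smaller quasi-transitive oriented graphs. The base case, a single vertex, is immediate, since then $N^+_D(v)=N^{++}_D(v)=\emptyset$. For the inductive step I would split according to whether $D$ is strong.

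If $D$ is not strong, I would take an acyclic ordering $D_1,\dots,D_t$ of its strong components; since $D$ is not strong, $t\ge 2$. By the acyclic labelling no arc leaves the sink component $D_t$, so for every $v\in V(D_t)$ we have $N^+_D(v)=N^+_{D_t}(v)$ and $N^{++}_D(v)=N^{++}_{D_t}(v)$. As $D_t$ is an induced subdigraph of $D$ it is again a quasi-transitive oriented graph, and it is strictly smaller because $t\ge 2$; so by the induction hypothesis $D_t$ has a Seymour vertex, which is then a Seymour vertex of $D$. Equivalently, one may read $D_t$ off the transitive factor $T$ in the non-strong part of Theorem \ref{qtdschar}.

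If $D$ is strong with at least two vertices, I would apply the strong part of Theorem \ref{qtdschar} to write $D=S[Q_1,\dots,Q_s]$ with $S$ a strong tournament and each $Q_i$ either a vertex or a non-strong quasi-transitive oriented graph; here $s\ge 2$, for otherwise $D=Q_1$ would be a strong digraph that is neither a single vertex nor non-strong, a contradiction. In particular each $Q_i$ has fewer than $|V(D)|$ vertices, so by induction each $Q_i$ has a Seymour vertex. Now form the extended tournament $D^*=S[V_1,\dots,V_s]$ of Lemma \ref{qt&et}. The key observation is that within a single partite set the Seymour property in $D^*$ is uniform: since $V_i$ is independent and every vertex of $V_i$ dominates exactly $\bigcup_{v_iv_j\in A(S)}V_j$, all vertices of $V_i$ share the same out-neighbourhood, and hence (via $N^{++}(x)=\bigcup_{u\in N^+(x)}N^+(u)\setminus N^+(x)$) the same second out-neighbourhood in $D^*$; thus either every vertex of $V_i$ is a Seymour vertex of $D^*$ or none is. By Theorem \ref{2nd:ET}, $D^*$ has a Seymour vertex, lying in some partite set $V_i$, so \emph{every} vertex of that $V_i$ is a Seymour vertex of $D^*$. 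Choosing $x\in V_i$ to be a Seymour vertex of $Q_i$ (which exists by induction), the vertex $x$ is simultaneously a Seymour vertex of $Q_i$ and of $D^*$, and Lemma \ref{qt&et} then yields that $x$ is a Seymour vertex of $D$.

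The crux, and the one place where something genuinely needs checking, is the coordination of the two requirements in Lemma \ref{qt&et}: a priori the Seymour vertex supplied by Theorem \ref{2nd:ET} and a Seymour vertex of $Q_i$ need not coincide. The uniformity observation above is exactly what dissolves this obstacle, since it promotes ``being a Seymour vertex of $D^*$'' from a property of a vertex to a property of a whole partite set, after which any inductively guaranteed Seymour vertex of the corresponding $Q_i$ suffices. The remaining points to keep honest are the strict decrease in size ($t\ge 2$ in the non-strong case and $s\ge 2$ in the strong case) and the fact that induced subdigraphs and composition factors of a quasi-transitive oriented graph are again quasi-transitive oriented graphs.
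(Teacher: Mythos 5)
Your proof is correct and follows essentially the same route as the paper's: induction on $|V(D)|$, passing to the terminal strong component in the non-strong case, and in the strong case combining Theorem \ref{2nd:ET} applied to $D^*=S[V_1,\dots,V_s]$ with the induction hypothesis on $Q_i$ via Lemma \ref{qt&et}. In fact you spell out two points the paper leaves implicit -- the uniformity argument showing that all vertices of a partite set of $D^*$ are simultaneously Seymour vertices, and the size decrease ($s\ge 2$, $t\ge 2$) needed for the induction -- so nothing is missing.
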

\begin{proof}
The proof is by induction on the order $n$ of $D$. It is easy to check that the cases $1\le n\le 3$ hold. Assume that $n\ge 4$.

{\it Case 1:} $D$ is not strong. Let $D=T[H_1, H_2, \ldots, H_t]$ be the canonical decomposition of $D$, where $T$ is a transitive oriented graph and $H_i$ is a strong quasi-transitive oriented graph for $i\in\{1,2,\ldots, t\}$. Without loss of generality, assume that $H_1, H_2, \ldots, H_t$ is the acyclic ordering of the strong components of $D$. By induction hypothesis, let $v$ be a Seymour vertex of $H_t$. This means $|N^+_{H_t}(v)|\le |N^{++}_{H_t}(v)|$. Clearly, $N^+_D(v)=N^+_{H_t}(v)$ and $N^{++}_D(v)=N^{++}_{H_t}(v)$. Thus $|N^+_D(v)|\le |N^{++}_D(v)|$.

{\it Case 2:} $D$ is strong. Let $D=S[Q_1, Q_2, \ldots, Q_s]$ be the canonical decomposition of $D$, where $S$ is a strong tournament and $Q_i$ is a single vertex or non-strong quasi-transitive oriented graph for $i\in\{1,2,\ldots, s\}$. Let $D^*=S[V_1, V_2,\ldots, V_s]$ be an extended tournament, where $V_i$ is the vertex set of the subdigraph $Q_i$ for $i\in\{1, 2, \ldots, s\}$. By Theorem \ref{2nd:ET}, $D^*$ has a Seymour vertex $v$. Assume $v\in V_i$. Then each vertex in $V_i$ is a Seymour vertex in $D^*$. By induction hypothesis, there is a Seymour vertex of $Q_i$, say also $v$. By Lemma \ref{qt&et}, $v$ is a Seymour vertex in $D$ and $|N^+_D(v)|\le |N^{++}_D(v)|$.
\end{proof}

Now we generalize Theorem \ref{2nd:2T} to extended tournaments. The following theorem indicates that an extended tournament always has two vertices with large second out-neighbourhood, provided that every vertex has out-degree at least 1 and, the second out-neighbourhood of Seymour vertex is more than out-neighbourhood if such two Seymour vertices are in a same partite set.

\begin{theorem}\label{2nd:2ET}
Let $D=S[V_1, V_2,\ldots, V_s]$ be an extended tournament with each $V_i$ being an independent set. If $D$ has no vertex of out-degree zero, then
\begin{itemize}
\item[(a)] there are at least two vertices $v$ such that $|N_D^+(v)|\le|N_D^{++}(v)|$, and
\item[(b)] there exists at least one vertex $v$ such that $|N^+_D(v)|<|N^{++}_D(v)|$ unless there is another Seymour vertex  $u$ which is in a distinct partite set from $v$.
\end{itemize}
\end{theorem}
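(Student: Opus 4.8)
The plan is to reduce to the tournament case by the same substitution used for Theorem \ref{2nd:ET}, and then to squeeze extra information out of Theorem \ref{2nd:2T}. Order each independent set as $V_i=\{a^i_1,\ldots,a^i_{m_i}\}$, replace $V_i$ by the transitive tournament $T_i$ with $a^i_p\to a^i_q$ iff $p<q$, and let $D'$ be the resulting tournament. I would first record three facts. (i) The new arcs lie inside the partite sets only, so for $v\in V_i$ we have the disjoint union $N^+_{D'}(v)=N^+_D(v)\cup N^+_{T_i}(v)$ and thus $d^+_{D'}(v)=d^+_D(v)+d^+_{T_i}(v)$. (ii) $N^{++}_{D'}(v)=N^{++}_D(v)$, the invariance already exploited in Theorem \ref{2nd:ET}; it holds because arcs between distinct partite sets are the same in $D$ and $D'$, while transitivity of $T_i$ prevents an out-neighbour of $v$ in $V_i$ from dominating an in-neighbour of $v$ in $V_i$. (iii) As $D$ has no vertex of out-degree zero and $d^+_{D'}(v)\ge d^+_D(v)$, neither does $D'$. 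Facts (i) and (ii) yield the transfer principle $|N^+_D(v)|\le|N^+_{D'}(v)|\le|N^{++}_{D'}(v)|=|N^{++}_D(v)|$, so every Seymour vertex of $D'$ is a Seymour vertex of $D$.

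Part (a) is then immediate: by (iii) the tournament $D'$ has no vertex of out-degree zero, so Theorem \ref{2nd:2T} supplies two distinct Seymour vertices of $D'$, and by the transfer principle both are Seymour vertices of $D$.

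For part (b) I would inspect where those two Seymour vertices $x\ne y$ of $D'$ sit. If they lie in distinct partite sets, then they are Seymour vertices of $D$ in distinct partite sets and we may take $v=x,\ u=y$. If they lie in a common $V_i$, write $x=a^i_r$ and $y=a^i_{r'}$ with $r<r'$; then $x$ is not the sink of $T_i$, so $d^+_{T_i}(x)=m_i-r\ge 1$. Seymourness of $x$ in $D'$ together with (i) and (ii) now reads $|N^+_D(x)|+d^+_{T_i}(x)\le|N^{++}_D(x)|$, which forces $|N^+_D(x)|<|N^{++}_D(x)|$; and since every vertex of a partite set shares the same first and second out-neighbourhood in $D$, in fact every vertex of $V_i$ satisfies this strict inequality. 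Thus in either case the conclusion of (b) holds.

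The verification of (i)--(iii) is routine and essentially contained in the proof of Theorem \ref{2nd:ET}. The only genuinely new ingredient, and the point I expect to require the most care, is the strictness step in part (b): one must observe that two Seymour vertices of $D'$ inside a single partite set cannot both be the sink of the transitive tournament placed there, so the one appearing earlier carries a strictly positive within-partite-set out-degree and therefore yields a strict inequality after passing back to $D$. Everything else transfers directly through the reduction.
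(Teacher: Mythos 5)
Your proposal is correct and follows essentially the same route as the paper's own proof: replace each partite set $V_i$ by a transitive tournament, apply Theorem \ref{2nd:2T} to the resulting tournament, and transfer Seymour vertices back via $|N^+_D(v)|\le|N^+_{D'}(v)|\le|N^{++}_{D'}(v)|=|N^{++}_D(v)|$. Your strictness step in part (b) --- the earlier of two same-partite-set Seymour vertices has positive out-degree inside its transitive tournament --- is exactly the paper's ``without loss of generality, $v$ dominates $u$'' argument, with your facts (i)--(iii) merely making explicit what the paper leaves implicit.
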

\begin{proof}
Let $D=S[V_1, V_2,\ldots, V_s]$ be an extended tournament. Now replace each $V_i$ with a transitive tournament on the same vertex set $V_i$. Now $D$ becomes a tournament, say $T$. Since $D$ has no vertex of out-degree zero, so does $T$. By Theorem 1.2, $T$ has at least two Seymour vertices, say $u,v$.
If $u,v$ are in the different partite sets, there is nothing to do. So assume that $u,v\in V_i$ for some $i\in\{1,2,\ldots,s\}$. Without loss of generality, $v$ dominates $u$ in $T$. Note that $|N_T^+(v)|\le|N_T^{++}(v)|$. Thus $|N_D^+(v)|<|N_D^{++}(v)|$.
Now we show that a quasi-transitive oriented graph always has two vertices with large second out-neighbourhood, provided that every vertex has out-degree at least 1.
\end{proof}

Finally, we generalize Theorem \ref{2nd:2T} to quasi-transitive oriented graphs.

\begin{theorem}\label{2nd:2QTOG}
A quasi-transitive oriented graph $D$ with no vertex of out-degree zero has at least two vertices $v$ such that $|N_D^+(v)|\le |N_D^{++}(v)|$.
\end{theorem}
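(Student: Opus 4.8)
The plan is to follow the inductive skeleton of the proof of Theorem \ref{2nd:QTOG}, running the induction on $n=|V(D)|$ but strengthening each branch so that it delivers \emph{two} Seymour vertices; the induction hypothesis is the present statement applied to quasi-transitive oriented graphs of smaller order with no vertex of out-degree zero. First I would observe that the hypothesis forces $n\ge 3$ (an oriented graph on at most two vertices cannot have all out-degrees positive), and that the only graph to examine in the base case is the directed triangle, all three of whose vertices are Seymour vertices. For the inductive step I split exactly as in Theorem \ref{2nd:QTOG}. When $D$ is not strong, I take the canonical decomposition $D=T[H_1,\ldots,H_t]$ with $H_1,\ldots,H_t$ in acyclic order, so $H_t$ is a terminal strong component. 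Since $D$ has no vertex of out-degree zero and no arc leaves $H_t$, every vertex of $H_t$ already has positive out-degree inside $H_t$; in particular $H_t$ is strong on at least two vertices and has no vertex of out-degree zero. As $t\ge 2$ we have $|V(H_t)|<n$, so the induction hypothesis supplies two Seymour vertices of $H_t$, and because $N^+_D(v)=N^+_{H_t}(v)$ and $N^{++}_D(v)=N^{++}_{H_t}(v)$ for $v\in V(H_t)$, these are two Seymour vertices of $D$.

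When $D$ is strong I use the canonical decomposition $D=S[Q_1,\ldots,Q_s]$ with $S$ a strong tournament (so each $V_i$ is a proper subset of $V(D)$) and form the extended tournament $D^*=S[V_1,\ldots,V_s]$, which has no vertex of out-degree zero because $S$ is strong. By Theorem \ref{2nd:2ET}, $D^*$ has at least two Seymour vertices, and since the vertices of any partite set $V_i$ are twins in $D^*$, a whole partite set is Seymour as soon as one of its vertices is. If two Seymour vertices of $D^*$ lie in distinct partite sets $V_i,V_j$, then applying Theorem \ref{2nd:QTOG} to $Q_i$ and to $Q_j$ produces Seymour vertices $x\in V_i$ and $y\in V_j$ of the respective subdigraphs, and Lemma \ref{qt&et} upgrades both to Seymour vertices of $D$, giving two distinct ones.

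The remaining, and main, obstacle is the case where all Seymour vertices of $D^*$ lie in a single partite set $V_i$ (necessarily $|V_i|\ge 2$). Here part (b) of Theorem \ref{2nd:2ET}, applied to $D^*$, yields a vertex with $|N^+_{D^*}(v)|<|N^{++}_{D^*}(v)|$, since its ``unless'' clause cannot hold when there is no Seymour vertex outside $V_i$; by the twin property this strict inequality then holds for \emph{every} $x\in V_i$. Writing $p=|N^+_{D^*}(x)|$ and $q=|N^{++}_{D^*}(x)|$ for the common values over $x\in V_i$, we thus have $q\ge p+1$. Using the disjoint unions of Lemma \ref{qt&et}, which give $|N^+_D(x)|=|N^+_{Q_i}(x)|+p$ and $|N^{++}_D(x)|=|N^{++}_{Q_i}(x)|+q$, I see that any $x\in V_i$ with $|N^+_{Q_i}(x)|\le|N^{++}_{Q_i}(x)|+1$ is already a Seymour vertex of $D$. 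It therefore suffices to exhibit two such ``almost-Seymour'' vertices inside $Q_i$.

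To find them I distinguish three cases according to the number of out-degree-zero vertices of $Q_i$. If $Q_i$ has none, the induction hypothesis gives two Seymour vertices of $Q_i$, which certainly qualify; if $Q_i$ has at least two, each trivially satisfies $0\le|N^{++}_{Q_i}(\cdot)|+1$. In the delicate case of exactly one such vertex $w$ I would delete it: $Q_i-w$ is a quasi-transitive oriented graph, Theorem \ref{2nd:QTOG} supplies a Seymour vertex $x_0\ne w$ of it, and because $w$ has out-degree zero in $Q_i$ one checks $N^{++}_{Q_i-w}(x_0)\subseteq N^{++}_{Q_i}(x_0)$ together with $|N^+_{Q_i}(x_0)|\le|N^+_{Q_i-w}(x_0)|+1$, whence $|N^+_{Q_i}(x_0)|\le|N^{++}_{Q_i}(x_0)|+1$; combined with $w$ this produces the two required vertices. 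Verifying this last inclusion, and confirming throughout that the subdigraphs to which I apply the induction hypothesis genuinely fall within its scope, is where the real work of the argument lies.
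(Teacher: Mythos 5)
Your proof is correct, and it shares the paper's overall skeleton: induction on $n$, the same non-strong/strong case split via the canonical decomposition, the extended tournament $D^*$, Theorem \ref{2nd:2ET} together with the twin property of partite sets, and Lemma \ref{qt&et}. Where you genuinely diverge is in the one delicate step, namely producing a \emph{second} Seymour vertex of $D$ when all Seymour vertices of $D^*$ lie in a single partite set $V_i$. The paper splits on the number of Seymour vertices of $Q_1$ (its $V_1$ plays the role of your $V_i$): if $Q_1$ has exactly one, it takes a second, nested canonical decomposition $Q_1=T_1[Q_1^1,\ldots,Q_1^r]$, asserts that the terminal strong component $Q_1^r$ is a singleton consisting of that unique Seymour vertex, and extracts from the penultimate component $Q_1^{r-1}$ a vertex $u$ with $|N^+_{Q_1}(u)|-1\le|N^{++}_{Q_1}(u)|$. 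You instead split on the number of out-degree-zero vertices of $Q_i$ and, in the critical case of exactly one such vertex $w$, delete it and apply Theorem \ref{2nd:QTOG} to $Q_i-w$, using the elementary deletion facts $|N^+_{Q_i}(x_0)|\le|N^+_{Q_i-w}(x_0)|+1$ and $N^{++}_{Q_i-w}(x_0)\subseteq N^{++}_{Q_i}(x_0)$ (the latter, incidentally, holds for \emph{any} vertex deletion; the out-degree-zero hypothesis on $w$ is not needed for that inclusion). Both routes finish identically: the strict inequality $|N^+_{D^*}(x)|<|N^{++}_{D^*}(x)|$, valid across all of $V_i$ by the twin property, absorbs the additive $1$. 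Your route buys some real simplifications: it avoids the nested canonical decomposition; it avoids the paper's claim, stated as ``clearly'' but actually requiring the induction hypothesis, that $Q_1^r$ is the unique terminal strong component and a singleton; and it makes no appeal to the ``well-organized median order'' of $D^*$ that the paper invokes but never defines, since you correctly obtain the strict inequality from Theorem \ref{2nd:2ET}(b) and the twin property alone. One presentational caveat: the disjointness of the unions in Lemma \ref{qt&et} (equivalently, that $N^{++}_{D^*}(x)\cap V_i=\emptyset$, which holds because $S$ is a tournament, hence has no $2$-cycles) is used but not proved in the paper either, so your appeal to ``the disjoint unions of Lemma \ref{qt&et}'' should be accompanied by that one-line verification in a fully written-out version.
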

\begin{proof}
The proof is by induction on the order $n$ of $D$. It is easy to check that the case $n=3$ holds. Assume that $n\ge 4$.

{\it Case 1:} $D$ is not strong. Let $D=T[H_1, H_2, \ldots, H_t]$ be the canonical decomposition of $D$, where $T$ is a transitive oriented graph and $H_i$ is a strong quasi-transitive oriented graph for $i\in\{1,2,\ldots, t\}$. Without loss of generality, assume that $H_1, H_2, \ldots, H_t$ is the acyclic ordering of the strong components of $D$. Since $D$ has no vertex out-degree zero, the last component $H_t$ must contain at least three vertices. This means $H_t$ is a quasi-transitive oriented graph with no vertex of out-degree zero. By induction hypothesis, there are at least two Seymour vertices in $H_t$. Since every Seymour vertex of $H_t$ is also a Seymour vertex of $D$, $D$ has at least two vertices $v$ such that $|N_D^+(v)|\le |N_D^{++}(v)|$.

{\it Case 2:} $D$ is strong. Let $D=S[Q_1, Q_2, \ldots, Q_s]$ be the canonical decomposition of $D$, where $S$ is a strong tournament and $Q_i$ is a single vertex or non-strong quasi-transitive oriented graph for $i\in\{1,2,\ldots, s\}$. Let $D^*=S[V_1, V_2,\ldots, V_s]$ be an extended tournament, where $V_i$ is the vertex set of the subdigraph $Q_i$ for $i\in\{1, 2, \ldots, s\}$. Clearly, $D^*$ is strong and hence has no vertex of out-degree zero. Let $\sigma=(v_1,v_2,\ldots,v_n)$ be a well-organized median order of $D^*$. By Theorem \ref{2nd:2ET}(b), $|N^+_{D^*}(v_n)|<|N^{++}_{D^*}(v_n)|$ unless there is another Seymour vertex $u$ which is in a distinct partite set from $v_n$. For the case when the latter holds, $D^*$ has two Seymour vertices which belong to different partite sets, say $V_\alpha$ and $V_\beta$. By induction hypothesis, there is a Seymour vertex in each $Q_i$ for $i\in \{1,2,\ldots, s\}$. Now Theorem \ref{qt&et} implies that the Seymour vertices of $Q_\alpha$ and $Q_\beta$ are also Seymour vertices of $D$.

So assume that $|N^+_{D^*}(v_n)|<|N^{++}_{D^*}(v_n)|$. For convenience, assume $v_n\in V_1$. We claim that the partite set $V_1$ contains at least two vertices. Indeed, if not, then $v_n$ is the unique vertex of $V_1$. By Theorem \ref{2nd:2ET}(a), there must exist another Seymour vertex $u$ which is not in $V_1$. As shown above, $u$ is also a Seymour vertex of $D$. So $V_1$ contains at least two vertices.

If there are at least two Seymour vertices in $Q_1$, then they are also Seymour vertices in $D$. So assume that $Q_1$ has exactly one Seymour vertex, say $v_n$. Now we claim that there is another vertex $u\in V_1$ distinct from $v_n$ such that $|N_{Q_1}^+(u)|-1\le |N_{Q_1}^{++}(u)|$. In fact,
set $Q_1=T_1[Q_1^1,Q_1^2,\ldots, Q_1^r]$ be the canonical decomposition of $Q_1$, where $T_1$ is a transitive oriented graph and $Q_1^i$ is a strong quasi-transitive oriented graph for $i\in\{1,2,\ldots, r\}$. Also, assume that $Q_1^1,Q_1^2,\ldots, Q_1^r$ is the acyclic ordering of the strong components of $Q_1$. Clearly, $Q_1^r$ is the unique terminal strong component and $v_n$ is the unique vertex of $Q_1^r$. By induction hypothesis, there is a Seymour vertex $u$ in $Q_1^{r-1}$. This means $|N_{Q_1^{r-1}}^+(u)|\le |N_{Q_1^{r-1}}^{++}(u)|$. Now
\[
|N_{Q_1}^+(u)|-1=|N_{Q_1^{r-1}}^+(u)|\le |N_{Q_1^{r-1}}^{++}(u)|\le |N_{Q_1}^{++}(u)|.
\]
Since $u$ and $v_n$ are in the same partite set $V_1$ of $D^*$, the inequality $|N^+_{D^*}(u)|<|N^{++}_{D^*}(u)|$ holds. Clearly, $N^+_D(u)=N_{Q_1}^+(u)\cup N^+_{D^*}(u)$ and $N^{++}_D(u)=N_{Q_1}^{++}(u)\cup N^{++}_{D^*}(u)$. Thus $|N^+_D(u)|\le |N^{++}_D(u)|$ and the theorem holds.
\end{proof}



\begin{thebibliography}{99}

\bibitem{BangGutin} {J. Bang-Jensen and G. Gutin, Digraphs: Theory, Algorithms and Applications, Springer, London, 2000.}

\bibitem{qtdschar} {J. Bang-Jensen and J. Huang. Quasi-transitive digraphs. {\it J. Graph Theory}, 20(2) (1995) 141-161.}

\bibitem{ChenShenYuster} {G. Chen, J. Shen, and R. Yuster. Second neighborhood via first neighborhood in digraphs. {\it Ann. Combin.}, 7(1) (2003) 15-20.}

\bibitem{CohnGodboleHarknessZhang} {Z. Cohn, A. Godbole, E. Wright Harkness and Y. Zhang. The number of Seymour vertices in random tournaments and digraphs. {\it Graphs and Combinatorics}, 32(5) (2016) 1805-1816.}

\bibitem{DeanLatka} {N. Dean and B. J. Latka. Squaring the tournament -- an open problem. {\it Congr. Numer.}, 109 (1995) 73-80.}

\bibitem{FidlerYuster} {D. Fidler and R. Yuster. Remarks on the second neighborhood problem. {\it J. Graph Theory}, 55(3) (2007) 208-220.}

\bibitem{Fisher} {D. C. Fisher. Squaring a tournament: a proof of Dean's conjecture. {\it J. Graph Theory}, 23(1) (1996) 43-48.}

\bibitem{Ghazal}{S. Ghazal. Seymour's second neighbourhood conjecture for tournaments missing a generalized star. {\it J. Graph Theory}, 71(1) (2012) 89-94.}

\bibitem{HavetThomasse} {F. Havet and S. Thomass\'{e}. Median orders of tournaments: a tool for the second neighborhood problem and Sumner'€檚 conjecture. {\it J. Graph Theory}, 35(4) (2000) 244-256.}

\bibitem{KanekoLocke} {Y. Kaneko and S. C. Locke. The minimum degree approach for Paul Seymour's distance 2 conjecture. {\it Congr. Numer.}, 148 (2001) 201-206.}


\end{thebibliography}
\end{document}